\DeclarePairedDelimiter{\floor}{\lfloor}{\rfloor}
\newcommand{\D}{\mathbf {D}}
\newcommand{\Op}{\mathbf {Op}}
\newcommand{\T}{\mathcal {T}}
\newcommand{\B}{\mathcal {B}}
\newcommand{\A}{\mathcal {A}}
\newtheorem{theorem}{Theorem}
\title{\LARGE \bf Maximally persistent connections for the periodic type}
\author{\authorblockN{Abdul Basit Memon\authorrefmark{1}
and Erik I. Verriest\authorrefmark{2}}
\authorblockA{School of Electrical and Computer Engineering\\
Georgia Institute of Technology, Atlanta, Georgia, USA\\
\authorrefmark{1}\tt\small abmemon@gatech.edu,\authorrefmark{2}\tt\small erik.verriest@ece.gatech.edu}
}
\begin{document}
\maketitle
\thispagestyle{empty}
\pagestyle{empty}

\begin{abstract}
This paper considers the optimal control problem of connecting two periodic trajectories with maximal persistence. A maximally persistent trajectory is close to the periodic type in the sense that the norm of the image of this trajectory under the operator defining the periodic type is minimal among all trajectories. A solution is obtained in this paper for the case when the two trajectories have the same period but it turns out to be only piecewise continuous and so an alternate norm is employed to obtain a continuous connection. The case when the two trajectories have different but rational periods is also solved. The problem of connecting periodic trajectories is of interest because of the observation that the operating points of many biological and artificial systems are limit cycles and so there is a need for a unified optimal framework of connections between different operating points. This paper is a first step towards that goal.\\
\end{abstract}

\section{INTRODUCTION}
The problem considered in this paper is that of finding a maximally persistent connection between two periodic trajectories $x^a$ and $x^b$ over a finite time interval $[a,b]$. This problem can also be stated as finding a trajectory $x(t)$ such that $x=x^a$ for $t\leq a$ and $x=x^b$ for $t\geq b$ and the trajectory in the interval $[a,b]$ is maximally persistent or as close to periodic as possible with respect to a specified norm. The problem of finding connections for periodic trajectories is one specific case in a much broader class of problems explored in earlier publications: \cite{verriest2012mtns,deryck2009persistence,verriest2008first,deryck2011thesis}, and \cite{memon2014kernel}. The problem will be rigorously stated later in the paper but before doing that we motivate why this special periodic case is of interest.

Periodic phenomena are highly prevalent in natural as well as artificial systems. For instance many biological processes ranging from the beating of the heart to locomotion, occur with periodic patterns. Moreover a single system may exhibit different types of periodic behavior. This leads to the natural question of transitions between different periodic behaviors. Consider the example of animal locomotion where the periodic patterns of movement of the limbs are called gaits. Most animals employ a variety of gaits such as one for walking and a different one for running \cite{golubitsky2003symmetry}. To switch from one gait to another, one necessarily has to employ an aperiodic transition but animals do this naturally in a graceful manner. It is our hypothesis that this translates to the transient motion remaining as close as possible to a periodic behavior. The theory of finding persistent transitions may also be of use in the control of legged robots \cite{clark2006gaits}. A popular approach to legged robot control is to specify the gaits or different schemes of motion of a robot and then switch through these gaits. This reduces the complexity of the control problem. The problem then becomes one of finding a suitable gait transition that connects the two desired gaits from the set of dynamically consistent transitions. Chemical reactors may also operate in periodic cycles since it results in better yields. In chemical process control, it is desired to transfer from one operating point (periodic cycle) to another smoothly so as to avoid drastic changes \cite{bailey1974periodic}. It can be argued here that the transition has to be maximally persistent in periodicity. These are but a few examples demonstrating the significance of the study of persistent connections between periodic trajectories.

The problem of connections for the periodic type was treated in \cite{yeung2010periodic} and \cite{yeung2011periodic}. The approach proposed in these earlier works is to solve this problem using Fourier series expansions. However to use this method practically the Fourier series needs to be truncated to a finite number of terms and so a compromise has to be made between accuracy and computational complexity, when choosing the number of terms. Another approach uses impulsive approximations to arrive at a result. The major contribution of this paper is that it follows the abstract formulation detailed in \cite{memon2014kernel} that leads to an alternative but simpler solution for finding a connection between two periodic trajectories of the same period. The connection is found over an interval of arbitrary length. It turns out that the connection in general is only piecewise continuous and so a method is also proposed to impose continuity on the connection. The case of finding connections between periodic trajectories of different periods is also discussed and a solution is presented for the case when the two periods are rational.

The rest of the paper is organized as follows: A brief review of the behavioral approach of Willems is presented in Section \ref{sec:ba}, to introduce the terminology used in the subsequent sections. Following that, in Section \ref{sec:gf}, a brief review of the abstract framework from \cite{memon2014kernel} is presented and the problem is formally stated using this framework. The main results, construction of the connections for the periodic type, are presented in Section \ref{sec:periodic} followed by some examples in Section \ref{sec:ex}. \\

\section{BEHAVIORAL APPROACH - A REVIEW}
\label{sec:ba}
We start by reviewing some of the relevant concepts from the behavioral approach to system theory. These ideas will be used in the later sections. A detailed exposition of the subject can be found in \cite{willems2007open} and \cite{willemBook}. Let $\mathbb{T}$ denote the set of independent variables called the time axis. For continuous time systems we take $\mathbb{T}=\mathbb{R}$. 
Let $\mathbb{W}$ be the set in which an $n$-dimensional observable signal vector, $w$, takes its values and is called the signal space. Typically, $\mathbb{W}=\mathbb{R}^n$, $n\geq 1$. The universum is the collection of all maps from the time axis to signal space, denoted by $\mathbb W^{\mathbb T}$. A dynamical system, $\Sigma$, is defined as a triple
$\Sigma=(\mathbb{T},\mathbb{W},\mathcal{B})$. The behavior $\mathcal{B}$ is a suitable subset of $\mathbb{W}^\mathbb{T}$, for instance the piecewise smooth functions, compatible with the laws governing $\Sigma$. We define the evaluation functional $\sigma_t$ by $\sigma_t(w)=w(t)$ a.e. (exception where $w$ is not defined). The shift operator $\mathbf S_\tau$ is defined by $\sigma_t (\mathbf S_\tau w)= \sigma_{t+\tau}w$.

The dynamical system $\Sigma=(\mathbb{T},\mathbb{W},\mathcal{B})$ is said to be linear if $\mathbb{W}$ is a vector space over $\mathbb{R}$ or $\mathbb{C}$, and the behavior $\mathcal{B}$ is a linear subspace of $\mathbb{W}^\mathbb{T}$. The dynamical system $\Sigma=(\mathbb{T},\mathbb{W},\mathcal{B})$ is said to be shift invariant   if $w\in \mathcal{B}$ implies $\mathbf S_\tau w\in \mathcal{B}$
for all $\tau\in \mathbb{T}$. If $\Sigma=(\mathbb{T},\mathbb{W},\mathcal{B})$ is a shift-invariant dynamical system,  the behavior restricted to a small open interval $(-\epsilon,\epsilon)$ is defined by
$\mathcal{B}_\epsilon=\{\tilde{w}:(-\epsilon,\epsilon)\rightarrow \mathbb{W}\,|\,\exists w\in \mathcal{B}\; {\rm such}\;{\rm that}\;\sigma_t{\tilde{w}}=\sigma_t{w}\; {\rm for}\; {\rm all}\; -\epsilon<t<\epsilon\}.$ The continuous time system $\Sigma$ is called {\em locally specified} if for all $\epsilon>0$,\[(w\in \mathcal{B})\Leftrightarrow (\left.\mathbf S_\tau w\right|_{(-\epsilon,\epsilon)}\in \mathcal{B}_\epsilon\; {\rm for}\;{\rm all} \;\tau\in \mathbb{R}).\]
The behavior defined by the system of differential equations
\[R(\D)w=0,\quad R(\xi)\in \mathbb{R}^{p\times n}[\xi]\]
where $R(\xi)$ is a matrix of polynomials with real coefficients and $\D$ is the differentiation operator, represents a system of $p$ linear time invariant (LTI) ordinary differential equations (ODE) in $n$ scalar variables. A system described by behavioral differential equations is locally specified. In order to verify if a trajectory $w$ belongs to the behavior, it suffices to look at the trajectory
in an infinitesimal neighborhood about each point.

A behavior is called {\em autonomous} if for all $w_1,w_2\in\mathcal{B}$ $w_1(t)=w_2(t)  \; {\rm for}\; t\leq 0$ implies $w_1(t)=w_2(t)$ for almost all $ t$. For an autonomous system, the future is entirely determined by its past. The notion of {\em controllability} is an important concept in the behavioral theory. Let $\mathcal{B}$ be the behavior of a linear time invariant system.  This system is called controllable if for any two trajectories $w_1$ and $w_2$ in  $\mathcal{B}$, there exists a $\tau\geq 0$ and a trajectory $w\in \mathcal{B}$ such that
\[
\sigma_t(w)=\left\{
\begin{array}{ll}\sigma_t(w_1) & t\leq 0\\\sigma_t(\mathbf S_{-\tau}w_2)& t\geq \tau\end{array}\right.\]
i.e., one can switch from one trajectory to the other, with perhaps a delay, $\tau$. Note that an autonomous system cannot get off a trajectory once it is on it. Hence an autonomous system is not controllable. \\

\section{GLUSKABI FRAMEWORK - A REVIEW}
\label{sec:gf}
As mentioned earlier, the objective of this paper is to find a connection between two periodic trajectories, say $x^a$ and $x^b$, over a finite interval in such a way that the connection is as periodic as possible. But before we delve into this specific problem we will review a generalized framework, named the Gluskabi framework, for dealing with problems of this type. This framework was first introduced in \cite{verriest2012mtns} and a detailed exposition can be found in \cite{memon2014kernel}. The more generalized problem is that of finding a trajectory $x(t)$ in the base behavior, given two trajectories $x^a$ and $x^b$ of the same type and a finite time interval $[a,b]$, such that $x=x^a$ for $t\leq a$ and $x=x^b$ for $t\geq b$ and $x$ in the interval $[a,b]$ is maximally persistent in the type. The following definitions will clarify this statement. 

The {\bf Base Behavior} ($\B_0$) is a subset of the universum $\B_0\subset \mathbb W^\mathbb T$ that defines the set of all allowable functions of interest. For any particular problem, the functions we are trying to connect lie in this set and the search for a connection\footnote{This usage of the term connection is different from a connection defined in differential geometry.} between the two is also conducted in this set. 

A \textbf{Type ($\T$)} is a strict subset of the base behavior ($\T\subset\B_0$) described by an operator $\Op:\A\to \mathcal V$ in the following way:
	\[ \mathcal T =\{w\in\mathcal A \;\text{such that}\; {\Op}\,w = 0 \}\]
where $\A\subset\B_0$ is the maximal linear space in the base behavior on which the operator is properly defined $\A\subset Dom(\Op)$ and $\mathcal V$ is a linear space as well. This is a familiar kernel representation.

A \textbf{Trait ($\T_\theta$)} is a subtype of the type i.e., it is a subset of the type such that it has its own characteristic behavior, given by some operator $\Op_\theta$, parametrized by $\theta$:
\[ \mathcal T_\theta =\{w\in\mathcal T \;\text{such that}\; {\Op}_{\theta}w = 0 \} \] 

Given a type $\T$ with the associated operator $\Op$, an element $w\in\A\subset\B_0$ is said to be \textbf{maximally persistent} with respect to the norm $\|.\|$, defined on $\mathcal V$ restricted to $[a,b]$, if $w$ minimizes $\|\Op\,w\|$.

Given a type $\T$ with the associated operator $\Op$, the \textbf{Gluskabi map} $g:\T\times\T\to\B_0$ with respect to the norm $\|.\|$, defined on $\mathcal V$ restricted to $[a,b]$, is defined as follows 
\[g(w_1,w_2)(t)=\left\{\begin{array}{lr}w_1(t) & t\leq a \\ arg\!\min_{w\in\A}{\|\Op\,w\|} \quad & a<t<b\\ w_2(t) & t\geq b\end{array}\right.\]

Clearly this requires that $\mathcal V$ restricted to the interval $[a,b]$ be a normed space. The connection in the interval $[a,b]$ will be called the ``\emph{Gluskabi raccordation}". Now that prerequisite terminology has been defined, it can be formally stated that the objective of this paper is to find the Gluskabi raccordation between any two given trajectories of the periodic type or in other words to find the Gluskabi map for the periodic type. \\

\section{THE PERIODIC TYPE}
\label{sec:periodic}
In the context of the framework from the previous section, the periodic types are defined by the kernel of operators involving the shift operator. For instance the $\tau$-periodic type, which is the behavior of periodic functions of period $\tau$, is defined by the kernel of the operator $\Op:= (\mathbf I - \mathbf S_\tau)$ where $\mathbf I$ is the identity operator and $\mathbf S$ is the shift operator. The shift operator can either be defined as an advance - $\mathbf S_\tau f(t) = f(t+\tau)$ or as a lag - $\mathbf S_\tau f(t) = f(t-\tau)$. Throughout this paper we will choose the shift to be a lag. A periodic function whose Fourier series exists can also be seen as a sum of harmonic signals of integer multiples of the base frequency. Inspired by this observation, the $\tau$-periodic type in the base behavior $\B_0=C^\omega(\mathbb R,\mathbb R)$ may also be characterized by the infinite product operator $\left[\D\prod_{n=1}^\infty{\left(1+\frac{1}{n^2\omega^2}\D^2\right)}\right]$, which can also be written as $\sinh{\left(\frac{\pi}{\omega}\D\right)}$ \cite{silverman1984ICA}, where $\omega=2\pi/\tau$. This representation defines a number of traits in terms of the number of finite product terms and these traits serve as various levels of approximation to the periodic functions. The equivalence between the two operators can be shown with some work 
but the former more general operator will be employed for the following theorems.

Three results are presented in this section for finding the Gluskabi raccordation between two trajectories of the periodic type. The first result deals with finding the raccordation for the $\tau$-periodic type over an interval which is a multiple of the period. The second result generalizes this and the raccordation interval can be of arbitrary length. The third result yields continuous raccordations. Finally, the case of raccordations between trajectories of different periods is discussed.
\subsection{Gluskabi Raccordation}
\begin{theorem}
Given two trajectories $x^a$ and $x^b$ from the $\tau$-periodic type, with the associated operator $(\mathbf I - \mathbf S_\tau)$, the Gluskabi raccordation between the two over the interval $[a,b]$ with respect to the $L^2$ norm is given by 
\begin{IEEEeqnarray*}{rCl}
	x(t) &=& x^a(t) + \left(1+\floor*{\frac{t-a}{\tau}}\right)\frac{x^b(t)-x^a(t)}{n+1}		
	\end{IEEEeqnarray*}
provided $b = a + n\tau$ for some $n\in\mathbb Z_+$.
\label{thm:multiple_period}
\end{theorem}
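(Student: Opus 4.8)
The plan is to use the $\tau$-periodicity of $x^a$ and $x^b$ to reduce the functional $\|\Op\,x\|_{L^2}^2$ to a one-parameter family, indexed by the phase $s\in[0,\tau)$, of identical finite-dimensional convex quadratic problems, to solve one such problem explicitly, and then to reassemble the minimizer and read off the stated formula.

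First I would slice into $\tau$-windows. Since $\Op\,x=x-\mathbf S_\tau x$ vanishes wherever $x$ is already $\tau$-periodic, and $x=x^a$ on $(-\infty,a]$, $x=x^b$ on $[b,\infty)$, the only windows on which $\Op\,x$ need not vanish are the $n+1$ intervals $[a+m\tau,\,a+(m+1)\tau]$, $m=0,\dots,n$. Introduce the slices $u_m(s)=x(a+m\tau+s)$ for $s\in[0,\tau)$: the windows $m=0,\dots,n-1$ carry the unknown connection, whereas the hypothesis $b=a+n\tau$ together with the periodicity of the data pins $u_{-1}(s)=x^a(a-\tau+s)=x^a(a+s)$ and $u_n(s)=x^b(b+s)$. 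Summing the window contributions gives
\[
\|\Op\,x\|_{L^2}^2=\int_0^\tau \sum_{m=0}^{n}\bigl|u_m(s)-u_{m-1}(s)\bigr|^2\,ds ,
\]
and for each fixed $s$ this integrand involves only the free scalars $u_0(s),\dots,u_{n-1}(s)$. Hence the minimization decouples in $s$, and since the pointwise-in-$s$ minimizer below is explicit and inherits the piecewise regularity of the data (so it is an admissible element of $\B_0$), it is legitimate to minimize under the integral sign.

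Next I would solve the scalar problem: minimize $\sum_{m=0}^{n}|u_m-u_{m-1}|^2$ over $u_0,\dots,u_{n-1}$ with $u_{-1}$ and $u_n$ held fixed. This is a strictly convex quadratic on an affine set; its unique minimizer is characterized by equal consecutive increments (equivalently, Cauchy--Schwarz bounds the sum below by $|u_n-u_{-1}|^2/(n+1)$ with equality only there), so $u_m=u_{-1}+\frac{m+1}{n+1}(u_n-u_{-1})$. Finally I would undo the change of variables: for $t$ in the $k$-th window, $t=a+k\tau+s$ with $k=\floor*{\frac{t-a}{\tau}}$, whence $x(t)=u_k(s)$; using $\tau$-periodicity to rewrite $x^a(a+s)=x^a(t)$ and $x^b(b+s)=x^b(t+(n-k)\tau)=x^b(t)$ turns this into
\[
x(t)=x^a(t)+\Bigl(1+\floor*{\tfrac{t-a}{\tau}}\Bigr)\frac{x^b(t)-x^a(t)}{n+1},
\]
which is the claim.

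I expect the only delicate point to be the bookkeeping in the second step — namely recognizing, via periodicity and the hypothesis $b=a+n\tau$, that the functional couples exactly $n+1$ increments with the two extreme slices pinned to $x^a$ and to $x^b$ — together with the (routine but necessary) justification that the minimization may be carried out phasewise under the $s$-integral; everything after that is a textbook convex quadratic plus a change of variables. As a byproduct one reads off that this raccordation jumps by $\tfrac{1}{n+1}(x^b-x^a)$ at each of $t=a,a+\tau,\dots,b$, i.e.\ it is merely piecewise continuous, which is exactly the defect that the continuity-enforcing variant addresses afterwards.
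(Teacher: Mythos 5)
Your proof is correct, and it takes a genuinely different route from the paper's. The paper treats this as an optimal control problem: it adjoins the constraint $u=(\mathbf I-\mathbf S_\tau)x$ with a Lagrange multiplier, derives the stationarity conditions $\lambda(t)=\lambda(t+\tau)$ on $(a,b)$ and $\lambda=u$ on $[a,b+\tau]$, and then telescopes the resulting difference equation to conclude that $u$ is the same function on every $\tau$-window, equal to $\frac{1}{n+1}(x^b-x^a)$. Your argument replaces the variational machinery with a direct reduction: slicing into the $n+1$ windows $[a+m\tau,a+(m+1)\tau]$ and exchanging sum and integral turns $\|\Op x\|_{L^2}^2$ into a phase-indexed family of identical finite-dimensional strictly convex quadratics, each solved exactly by Cauchy--Schwarz with equality at equal increments. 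The telescoping structure is implicitly the same in both proofs, but your route buys something the paper's does not state: a proof of \emph{global} optimality and uniqueness of the minimizer (the paper's multiplier computation only verifies first-order necessary conditions), and your phasewise lower bound $\int_0^\tau |u_n(s)-u_{-1}(s)|^2/(n+1)\,ds$ gives the optimal cost for free. What the paper's approach buys in exchange is uniformity of method: the same Lagrange-multiplier template extends directly to the arbitrary-interval case (Theorem~\ref{thm:general}) and, more importantly, to the derivative-penalized operator of Theorem~\ref{thm:continuous}, where the pointwise-in-phase decoupling you exploit no longer holds because the cost couples neighboring phases through $\dot x$. Your handling of the one delicate point --- that interchanging $\min$ and $\int$ is justified because the pointwise minimizer assembles into an admissible piecewise continuous trajectory attaining the integrated lower bound --- is adequate.
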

\begin{proof}
This can be proved using optimal control theory. Let $u = (\mathbf I - \mathbf S_\tau)x$ or \[u(t) = x(t)-x(t-\tau)\] for any $x$ in the domain of the operator. Notice that if $x$ is periodic then $u=0$. Then the Gluskabi raccordation is the argument of the following optimization problem:
\begin{equation}
	\min_{x(t)}{J} = \min_{x(t)}{\frac{1}{2}\int_a^{b+\tau}{u^2(t)dt}}. 
	\label{eq:opt_prob}
\end{equation}
Notice that the integral is taken over the interval $[a,b+\tau]$ instead of $[a,b]$. This is the maximal interval over which $u$ is possibly non-zero since $x$ is equal to the $\tau$-periodic functions $x^a$ and $x^b$ over the intervals $(-\infty,a]$ and $[b,\infty)$ respectively and so $u$ is zero over the intervals $(-\infty,a]$ and $[b+\tau,\infty)$. Now adjoining the definition of $u$ with Lagrange multiplier to the cost function (\ref{eq:opt_prob}) gives:
\[ J(u)=\int_a^{b+\tau}{ \frac{u^2(t)}{2}+\lambda(t)\left[x(t)-x(t-\tau)-u(t)\right]dt}. \]
Perturbing $u$ by $\delta u$ changes the cost function to 
\begin{multline}
	J(u+\delta u)=\int_a^{b+\tau}\frac{1}{2}(u+\delta u)^2+\lambda(t)\left[x(t)+\delta x(t)\right. \\ \left.-x(t-\tau) -\delta x(t-\tau)-u(t)-\delta u(t)\right]dt
\end{multline}
and so 
\begin{IEEEeqnarray}{rCl}
\delta J &= & J(u+\delta u)-J(u)\nonumber \\
&\approx& \int_a^{b+\tau}u(t)\delta u(t)+\lambda(t)\left[\delta x(t)-\delta x(t-\tau)\right.\nonumber \\ 
&& \left.\quad\quad -\delta u(t)\right]dt \nonumber\\
&=& \int_a^{b+\tau}{[u(t)-\lambda(t)]\delta u(t)dt} + \int_a^{b+\tau}{\lambda(t)\delta x(t)dt} \nonumber\\
&& -\int_{a-\tau}^{b}{\lambda(t+\tau)\delta x(t)dt}. \label{eq:J_gateaux}
\end{IEEEeqnarray}
A necessary condition for an $x$ to minimize (\ref{eq:opt_prob}) is that $\delta J$ be zero for any arbitrary $\delta u$. The perturbation $\delta x$ in (\ref{eq:J_gateaux}) is zero in the intervals $[b,b+\tau]$ and $[a-\tau,a]$ and so the integrals involving $\delta x$ are zero in these intervals. The multiplier $\lambda(t)$ is chosen in the following way in the interval $(a,b)$ to avoid computing $\delta x$ in this interval: 
\begin{equation}
	\lambda(t)=\lambda(t+\tau) \quad\quad \forall t \in (a,b).
	\label{eq:euler}
\end{equation}
Then the necessary condition for optimality is
\begin{equation}
	\lambda(t)=u(t) \quad\quad \forall t \in [a,b+\tau].
	\label{eq:optimality}
\end{equation}
Notice that $\lambda(t)$ is free in the interval $[b,b+\tau]$ and as a consequence of (\ref{eq:euler}) choosing $\lambda$ in any $\tau$ length sub-interval in $[a,b+\tau]$ completely determines it for all time. The initial definition of $u$ in the interval $[a,b+\tau]$ can be written as follows, using (\ref{eq:optimality}) and (\ref{eq:euler}): For any $\theta\in[a,a+\tau]$ and $i\in\{0,1,\cdots,n\}$,
\begin{IEEEeqnarray}{rCl}
	x(\theta+i \tau) &=& u(\theta+i\tau) + x(\theta + (i-1)\tau) \nonumber\\
	&=& u(\theta+i\tau)+u(\theta + (i-1)\tau) + x(\theta + (i-2)\tau) \nonumber\\
	&=& (i+1)\;u(\theta) + x(\theta -\tau) \nonumber\\
	&=& (i+1)\;u(\theta) + x^a(\theta-\tau).
	\label{eq:difference}
\end{IEEEeqnarray}
When $i=n$, we have that $x(\theta+n \tau)=x^b(\theta+n \tau)$ and so (\ref{eq:difference}) becomes
\begin{equation*}
	x^b(\theta+n \tau) = (n+1)\;u(\theta) + x^a(\theta-\tau).
\end{equation*}
Since $x^a$ and $x^b$ are both periodic with period $\tau$, the above equation yields the following simplified expression for $u(\theta)$ in the interval $[a,a+\tau]$:
\begin{equation}
u(\theta) = \frac{x^b(\theta)-x^a(\theta)}{n+1}.		
\end{equation}
This consequently defines $u(t)$ in the entire interval $[a,b+\tau]$ by (\ref{eq:euler}) and (\ref{eq:optimality}). Thus the Gluskabi raccordation for this periodic type is obtained from (\ref{eq:difference}) and is as follows: 
\[x(t) = x^a(t) + \left(1+\floor*{\frac{t-a}{\tau}}\right)u(t).\] 
\end{proof}

This result indicates that the Gluskabi raccordation basically takes the difference between a period of the two trajectories and covers this difference during the raccordation interval in $n+1$ periods. This result is generalized in the next theorem for the case where the raccordation interval is of arbitrary length i.e., its length is not restricted to be a multiple of the period. \\

\begin{theorem}
Given two trajectories $x^a$ and $x^b$ from the $\tau$-periodic type, with the associated operator $(\mathbf I - \mathbf S_\tau)$, the Gluskabi raccordation between the two over the interval $[a,b]$ with respect to the $L^2$ norm is given by 
\begin{IEEEeqnarray*}{rCl}
	x(t) &=& x^a(t) + \left(1+\floor*{\frac{t-a}{\tau}}\right)u(t)
\end{IEEEeqnarray*}
where 
\begin{IEEEeqnarray*}{rCl}
u(t) = \left\{ \begin{array}{lr} \frac{1}{n+2}\left[x^b(t)-x^a(t)\right] & a\leq t' \leq (b-n\tau)	\\
										\frac{1}{n+1}\left[x^b(t)-x^a(t)\right] & (b-n\tau)\leq t'\leq (a+\tau) \\ & t' = (t-a) \mod \tau\end{array}\right. 
\end{IEEEeqnarray*}
\label{thm:general}
\end{theorem}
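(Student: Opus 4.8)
The plan is to repeat, essentially verbatim, the variational argument used for Theorem~\ref{thm:multiple_period}, the only genuinely new element being a more careful accounting of how the translates $\theta,\theta+\tau,\theta+2\tau,\dots$ of a fixed phase interact with the two endpoints once $b-a$ is no longer an integer multiple of $\tau$. As before I would set $u=(\mathbf I-\mathbf S_\tau)x$, i.e.\ $u(t)=x(t)-x(t-\tau)$, adjoin this definition to the cost $\tfrac{1}{2}\int_a^{b+\tau}u^2(t)\,dt$ with a Lagrange multiplier $\lambda$, and compute the first variation exactly as in (\ref{eq:J_gateaux}). The boundary data still force $\delta x\equiv 0$ on $[a-\tau,a]\cup[b,b+\tau]$, so choosing $\lambda(t)=\lambda(t+\tau)$ for $t\in(a,b)$ annihilates the $\delta x$ terms, and stationarity with respect to $\delta u$ gives $\lambda(t)=u(t)$ on $[a,b+\tau]$; together these yield
\[
u(t)=u(t+\tau)\qquad\text{for all }t\in(a,b).
\]
Since the cost is quadratic and the constraint affine, this stationary point is the global minimiser, so it suffices to solve these conditions.

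Next I would telescope the defining relation. Write $n=\floor*{\frac{b-a}{\tau}}$ and parametrise a point of the support $[a,b+\tau]$ as $\theta+i\tau$ with $\theta\in[a,a+\tau)$ (this $\theta$ plays the role of $t'$ in the statement). Iterating $x(\theta+i\tau)=u(\theta+i\tau)+x(\theta+(i-1)\tau)$ downward until the argument drops below $a$, where $x=x^a$ is $\tau$-periodic, gives
\[
x(\theta+i\tau)=x^a(\theta)+\sum_{j=0}^{i}u(\theta+j\tau).
\]
The relation $u(t)=u(t+\tau)$ on $(a,b)$ then collapses this sum, but only across those translates $\theta+j\tau$ that actually lie in $(a,b)$. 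This is where the case split arises: if $\theta\le b-n\tau$ then $\theta+n\tau\le b$ while $\theta+(n+1)\tau\le b+\tau$, so $\theta$ has $n+2$ translates inside the support and the chain of equalities links $u(\theta)=u(\theta+\tau)=\cdots=u(\theta+(n+1)\tau)$; if $\theta>b-n\tau$ then $\theta+n\tau>b$ and $\theta+(n+1)\tau>b+\tau$, so $\theta$ has only the $n+1$ translates $\theta,\dots,\theta+n\tau$ in the support, and the chain links precisely those. (The single phase $\theta=b-n\tau$ is a null set, which is why both intervals in the statement are closed.)

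Finally I would invoke the remaining boundary condition $x=x^b$ for $t\ge b$, again using that $x^b$ is $\tau$-periodic. In the first case the translate $\theta+(n+1)\tau$ exceeds $b$, so $x^b(\theta)=x(\theta+(n+1)\tau)=x^a(\theta)+(n+2)\,u(\theta)$, whence $u(\theta)=\tfrac{1}{n+2}\bigl(x^b(\theta)-x^a(\theta)\bigr)$; in the second case $\theta+n\tau$ already exceeds $b$, so $x^b(\theta)=x^a(\theta)+(n+1)\,u(\theta)$ and $u(\theta)=\tfrac{1}{n+1}\bigl(x^b(\theta)-x^a(\theta)\bigr)$. Substituting these back into the telescoped identity, and noting that $i=\floor*{\frac{t-a}{\tau}}$ when $t=\theta+i\tau$, produces $x(t)=x^a(t)+\bigl(1+\floor*{\frac{t-a}{\tau}}\bigr)u(t)$ with $u$ as stated.

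The step I expect to be the main obstacle is the bookkeeping in the second paragraph: one must pin down exactly which translates $\theta+i\tau$ fall inside $[a,b+\tau]$ and, crucially, whether $\theta+n\tau$ lies in the \emph{open} interval $(a,b)$ on which the Euler condition forces $u$ to repeat. Everything else is a direct adaptation of the proof of Theorem~\ref{thm:multiple_period}; it is precisely the comparison of the phase $\theta$ with $b-n\tau$ that separates $u$ into its two pieces and explains the appearance of the two denominators $n+1$ and $n+2$.
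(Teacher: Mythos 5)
Your proposal is correct and follows essentially the same route as the paper: the identical variational setup yielding $\lambda(t)=\lambda(t+\tau)$ on $(a,b)$ and $\lambda=u$ on $[a,b+\tau]$, the same telescoping of $x(t)=u(t)+x(t-\tau)$ from $x^a$, and the same case split on whether the phase $\theta$ lies below or above $b-n\tau$ (equivalently whether $\theta+n\tau$ still lies in $(a,b)$), which produces the denominators $n+2$ and $n+1$. Your added remarks on global optimality of the stationary point and on the null set $\theta=b-n\tau$ are harmless refinements, not a different argument.
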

\begin{proof}
The proof of this theorem is exactly along the lines of Theorem \ref{thm:multiple_period} since we employ the same cost function and we obtain the same Euler-Lagrange equation and the optimality equation i.e.,
\begin{IEEEeqnarray}{rClr}
	\lambda(t)&=&\lambda(t+\tau)  \quad&\forall t \in (a,b) \label{eq:euler_thm2}\\
	\lambda(t)&=& u(t)  \quad&\forall t \in [a,b+\tau]. \label{eq:optimality_thm2}
\end{IEEEeqnarray}
Again it holds that $\lambda(t)$ is free in the interval $[b,b+\tau]$ and as a consequence of (\ref{eq:euler_thm2}) choosing $\lambda$ in any $\tau$ length sub-interval in $[a,b+\tau]$ completely determines it for all time. The raccordation interval $[a,b]$ can be split up into intervals of length $\tau$ with possibly one remaining interval of length less than $\tau$. Let $n=\floor*{\frac{b-a}{\tau}}$. Then $b-a = n\tau + (b-a-n\tau)$. From the definition of $u$ we have that,
\[ x(t) = u(t) + x(t-\tau). \]
Or for any $\theta\in[a,a+\tau]$ and $i\in\{0,1,\cdots,n\}$,
\begin{IEEEeqnarray}{rCl}
	x(\theta+i \tau) &=& u(\theta+i\tau) + x(\theta + (i-1)\tau) \nonumber\\
	&=& (i+1)\;u(\theta) + x^a(\theta-\tau)
	\label{eq:difference_thm2}
\end{IEEEeqnarray}
by making use of (\ref{eq:euler_thm2}) and (\ref{eq:optimality_thm2}). There are two separate cases to be dealt with here. When $i=n$ and $\theta\in [b-n\tau,a+\tau]$, we have that $x = x^b$ and so (\ref{eq:difference_thm2}) becomes
\begin{equation}
	x^b(\theta+n\tau) = (n+1)u(\theta) + x^a(\theta-\tau).
	\label{eq:diff_right}
\end{equation}
On the other hand when $i=n+1$ and $\theta\in [a,b-n\tau]$, (\ref{eq:euler_thm2}) still holds since the argument $(\theta+i\tau)$ is in the interval $[b,b+\tau]$ and so (\ref{eq:difference_thm2}) becomes
\begin{equation}
	x^b(\theta+(n+1)\tau) = (n+2)u(\theta) + x^a(\theta-\tau).
	\label{eq:diff_left}
\end{equation}
Using the fact that both $x^a$ and $x^b$ are periodic with period $\tau$, the above equations (\ref{eq:diff_right}) and (\ref{eq:diff_left}) yield the following simplified expression for $u(\theta)$ in the interval $[a,a+\tau]$:
\begin{equation}
u(\theta) = \left\{ \begin{array}{lr} \frac{1}{n+2}\left[x^b(\theta)-x^a(\theta)\right] & a\leq\theta\leq (b-n\tau)	\\
										\frac{1}{n+1}\left[x^b(\theta)-x^a(\theta)\right] & (b-n\tau)\leq\theta\leq (a+\tau)\end{array}\right..
\label{eq:diff_final}
\end{equation}
This consequently defines $u(t)$ in the entire interval $[a,b+\tau]$ by (\ref{eq:euler_thm2}) and (\ref{eq:optimality_thm2}). Thus the Gluskabi raccordation over the interval $[a,b]$ is obtained by substituting the above expression in (\ref{eq:difference_thm2}) as follows:
\begin{IEEEeqnarray*}{rCl}
x(t) &=& x^a(t) + \left(1+\floor*{\frac{t-a}{\tau}}\right)u(t).	
\end{IEEEeqnarray*}
\end{proof}

As in Theorem \ref{thm:multiple_period}, this result indicates that the Gluskabi raccordation is the difference between a period of the two trajectories but this time covered partly in $n+1$ steps and partly in $n+2$ steps. Before proceeding to the next result some remarks will be made about the previous two theorems. 
The same results are obtained for the Gluskabi raccordation if the alternate definition of the shift operator, $\mathbf S_\tau f(t) = f(t+\tau)$, is used in the operator describing the periodic type in theorems \ref{thm:multiple_period} and \ref{thm:general}. This indicates that the Gluskabi raccordation obtained is truly associated with the periodic type. The two theorems can also be viewed in the Fourier domain. The Gluskabi raccordation there is similar to the time domain and the difference between the Fourier coefficients is equally covered in $n+1$ or $n+2$ steps. Furthermore, these same results can also be obtained using the compact adjoint expression derived for the Gluskabi raccordation in \cite{memon2014kernel}. The present approach is chosen simply because of the insight it offers into the periodic type. 

\subsection{Continuous Gluskabi Raccordation} 
The chosen Base behavior $\B_0$ for both the previous results is the space of piecewise continuous functions and as the results indicate the Gluskabi raccordation is piecewise continuous even if the trajectories being connected are continuous. The resulting raccordation can be approximated by continuous solutions but there is no unique way of doing this. A way to impose continuity of the Gluskabi raccordation is to append a cost on the derivative of the trajectory to the usual cost function i.e., instead of minimizing $\|\Op x\|$ one can minimize $\|\Op x\|+\rho^2\|\D\,\Op x\|$ with $\rho^2$ being a weighting factor. In order to do this the base behavior $\B_0$ is chosen to be the space of continuous functions. The Gluskabi raccordation is now found for the piecewise differentiable $\tau$-periodic type and the operator characterizing it is $\Op\,x=[x(t)-x(t-\tau)]+\rho^2[\dot{x}(t)-\dot{x}(t-\tau)]$. This makes sense because if $x$ is $\tau$-periodic or $x(t)-x(t-\tau)=0$ then $\dot{x}(t)-\dot{x}(t-\tau)=0$. The resulting Gluskabi raccordation is presented in the following theorem. \\

\begin{theorem}
Given two trajectories $x^a$ and $x^b$ from the piecewise differentiable $\tau$-periodic type, with the associated operator $(\mathbf I - \mathbf S_\tau)+\rho^2(\D - \D\,\mathbf S_\tau)$, the Gluskabi raccordation between the two over the interval $[a,b]$ with respect to the $L^2$ norm is given by solving the following set of equations:
\begin{IEEEeqnarray*}{rCl}
	f_0 &=& \rho\left(x_1+x_a-2x_0\right) \nonumber\\
	f_k &=& \rho\left(x_{k+1}+x_{k-1}-2x_k\right) \quad\quad 1\leq k\leq n-2\nonumber\\
	f_{n-1} &=& \rho\left(x_b+x_{n-2}-2x_{n-1}\right) \\
	f_i &=& c_{0}^1\,e^{\frac{1}{\rho}(t+i\tau)} - c_{0}^2\,e^{-\frac{1}{\rho}(t+i\tau)} \quad 0\leq i\leq n-1
\end{IEEEeqnarray*}
with associated boundary conditions $x_0(0) = x^a(a)$ and $x_{n-1}(\tau) = x^b(b)$, provided $b = a + n\tau$ for some $n\in\mathbb Z_+$.
\label{thm:continuous}
\end{theorem}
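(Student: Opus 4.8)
The plan is to treat this, like Theorems~\ref{thm:multiple_period} and~\ref{thm:general}, as an optimal control problem; the one essential novelty is that the penalised quantity now carries a derivative, so the base behavior is enlarged to continuous trajectories and the first variation will acquire corner terms at the interior period boundaries $a+k\tau$. As before set $u=(\mathbf{I}-\mathbf{S}_\tau)x$, i.e.\ $u(t)=x(t)-x(t-\tau)$, so that $\tau$-periodicity of $x$ is equivalent to $u\equiv 0$; since $x$ agrees with the $\tau$-periodic $x^a$ on $(-\infty,a]$ and with $x^b$ on $[b,\infty)$, $u$ is supported on $[a,b+\tau]$ and $u(a)=u(b+\tau)=0$. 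The raccordation is the minimizer of $J=\tfrac12\int_a^{b+\tau}\bigl(u^{2}+\rho^{2}\dot u^{2}\bigr)\,dt$, i.e.\ of $\|(\mathbf{I}-\mathbf{S}_\tau)x\|^{2}+\rho^{2}\|\mathbf{D}(\mathbf{I}-\mathbf{S}_\tau)x\|^{2}$ over $[a,b+\tau]$ as motivated before the theorem, and I would split $J$ over the $n+1$ period-pieces: write $x_k(\theta)=x(a+k\tau+\theta)$ and $u_k=x_k-x_{k-1}$ for $\theta\in[0,\tau]$, with the conventions $x_{-1}:=x^a$ and $x_n:=x^b$; the minimization then runs over $x_0,\dots,x_{n-1}$ subject to continuity of $x$ at the nodes and to $x_0(0)=x^a(a)$, $x_{n-1}(\tau)=x^b(b)$.

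Next I would adjoin the defining relation $u=x-\mathbf{S}_\tau x$ with a multiplier $\lambda$ and form $\delta J$. Perturbing $x$ reproduces verbatim the Euler--Lagrange equation of Theorem~\ref{thm:multiple_period}, $\lambda(t)=\lambda(t+\tau)$ on $(a,b)$, which in piece notation collapses the period-restrictions $\lambda_0,\dots,\lambda_n$ of $\lambda$ to one common function $\Lambda$. Perturbing $u$ now involves $\delta\dot u$, hence an integration by parts: the interior part gives the optimality relation $\lambda=u-\rho^{2}\ddot u$ --- a second-order relation replacing the algebraic $\lambda=u$ of Theorem~\ref{thm:multiple_period} --- and, since a minimizer need only be continuous so $\dot u$ may jump at the nodes, the boundary part leaves one corner (Weierstrass--Erdmann-type) condition at each interior node $a+k\tau$. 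Combining $\lambda_k\equiv\Lambda$ with $\lambda=u-\rho^{2}\ddot u$ makes $u_k-\rho^{2}\ddot u_k$ independent of $k$, so the second difference $x_{k+1}-2x_k+x_{k-1}=u_{k+1}-u_k$ solves the homogeneous equation $v=\rho^{2}\ddot v$, whose characteristic roots are $\pm 1/\rho$, and is therefore a combination of $e^{\pm t/\rho}$; the corner conditions then force the coefficients of these exponentials to propagate geometrically in $k$, which is exactly $f_k:=\rho\,(x_{k+1}+x_{k-1}-2x_k)=c_0^{1}e^{(t+k\tau)/\rho}-c_0^{2}e^{-(t+k\tau)/\rho}$ with the same two constants throughout (the cases $k=0$ and $k=n-1$ producing the stated $x_a$- and $x_b$-terms via the conventions above).

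To finish, the $n$ relations $f_k=\rho(x_{k+1}+x_{k-1}-2x_k)$ form a second-order recursion in $k$ with the prescribed end data $x_{-1}=x^a$, $x_n=x^b$, which for each fixed $t$ determines $x_0,\dots,x_{n-1}$ as affine functions of $(c_0^{1},c_0^{2})$; the two remaining conditions $x_0(0)=x^a(a)$ and $x_{n-1}(\tau)=x^b(b)$ then pin down $c_0^{1},c_0^{2}$, and the raccordation is recovered from $x(a+k\tau+\theta)=x_k(\theta)$. The step I expect to be the main obstacle is precisely this corner analysis at the interior nodes: in Theorems~\ref{thm:multiple_period}--\ref{thm:general} the cost carries no derivative, $\delta x$ need only vanish outside $(a,b)$, and nothing of the kind arises, whereas here the first variation produces a term at every $a+k\tau$ and one must derive these correctly and then verify that, together with the interior continuity conditions and the two end conditions, the resulting linear boundary-value problem is exactly determined. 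Finally I would remark that, in contrast to Theorems~\ref{thm:multiple_period}--\ref{thm:general}, there is no closed-form raccordation here, so the statement can only characterize it as \emph{the} solution of the displayed system, and that, the pieces being built from $e^{\pm t/\rho}$ and from the continuous $x^a,x^b$ glued continuously at the nodes, the raccordation is genuinely continuous --- the purpose of the modified norm.
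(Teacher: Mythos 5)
Your proposal is correct and follows essentially the same route as the paper: cast the problem as $\min\int_a^{b+\tau}\bigl[(x(t)-x(t-\tau))^2+\rho^2(\dot x(t)-\dot x(t-\tau))^2\bigr]dt$, split $[a,b]$ into period-pieces $x_k(\theta)=x(a+k\tau+\theta)$, show that the second difference $x_{k+1}+x_{k-1}-2x_k$ is a combination of $e^{\pm\theta/\rho}$ whose coefficients propagate geometrically in $k$, and close the resulting second-order recursion with the data $x_{-1}=x_a$, $x_n=x_b$ and the two endpoint conditions. The only (immaterial) difference is bookkeeping: the paper adjoins $u_k=\dot x_k$ with costates $\lambda_k$ satisfying $\ddot\lambda_k=\lambda_k/\rho^2$ and obtains the geometric propagation from the matching conditions $\lambda_k(0)=\lambda_{k-1}(\tau)$, $\dot\lambda_k(0)=\dot\lambda_{k-1}(\tau)$, whereas you adjoin $u=(\mathbf I-\mathbf S_\tau)x$ and obtain the same propagation from $\lambda(t)=\lambda(t+\tau)$, $\lambda=u-\rho^2\ddot u$ and Weierstrass--Erdmann continuity of $\dot u$ at the nodes --- these yield identical equations.
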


\begin{proof}
The proof uses the usual machinery of optimal control. The Gluskabi raccordation in this case is the argument of the following optimization problem: 
\[ \min_{x(t)}{\int_a^{b+\tau}{[x(t)-x(t-\tau)]^2+\rho^2[\dot{x}(t)-\dot{x}(t-\tau)]^2dt}}. \]
To simplify the problem let's define a set of shifted functions $x_k(\theta) = x(a+k\tau+\theta)$ for $k\in\{0,\cdots,n-1\}$ and $\theta\in[0,\tau]$. Also let $u_k=\dot{x}_k$, $x_a(\theta)=x^a(a-\tau+\theta)$ and $x_b(\theta)=x^b(b+\theta)$. These shifted functions cover the entire interval $[a,b]$ and the cost function $J$ can now be written as,
\begin{multline}
J = \frac{1}{2}\int_0^{\tau} \left[x_0(\theta)-x_a(\theta)\right]^2 + \rho^2\left[u_0(\theta)-u_a(\theta)\right]^2+ \\ \quad\sum_{k=1}^{n-1}{\left[x_k(\theta)-x_{k-1}(\theta)\right]^2+\rho^2\left[u_k(\theta)-u_{k-1}(\theta)\right]^2} + \\ \left[x_{b}(\theta)-x_{n-1}(\theta)\right]^2 + \rho^2\left[u_b(\theta)-u_{n-1}(\theta)\right]^2 d\theta
\label{eq:cost_cont}
\end{multline}
and the boundary conditions take the form of $x_0(0) = x_a(\tau)$ and $x_{n-1}(\tau) = x_b(0)$. Since the raccordation $x$ has to be continuous, additional constraints are imposed on the boundaries of the interior shifted functions, specifically $x_k(0)=x_{k-1}(\tau)$ for $k\in\{1,\cdots,n-1\}$. Now adjoining the constraint equations $u_k-\dot{x_k}=0$ along with the Lagrange multiplier $\lambda_k$ for $k\in\{0,\cdots,n-1\}$ to the cost function (\ref{eq:cost_cont}) and employing the usual techniques of optimal control the following set of Euler-Lagrange equations are obtained:
\begin{IEEEeqnarray}{rCl}
\dot{\lambda}_0 &=& x_1+x_a-2x_0 \nonumber\\
\dot{\lambda}_k &=& x_{k+1}+x_{k-1}-2x_k \quad\quad 1\leq k\leq n-2\nonumber\\
\dot{\lambda}_{n-1} &=& x_b+x_{n-2}-2x_{n-1}.
\label{eq:euler_cont}
\end{IEEEeqnarray}
Boundary conditions on the lagrange multipliers are also obtained stemming from the fact that to preserve continuity the perturbations at the end points of the shifted functions are the same i.e. $\delta x_k(0) = \delta x_{k-1}(\tau)$. These boundary conditions turn out to be $\lambda_k(0) = \lambda_{k-1}(\tau)$ for $k\in\{1,\cdots,n-1\}$, which has a similar form to the boundary conditions on shifted functions. The set of optimality conditions is as follows:
\begin{IEEEeqnarray}{rCl}
\frac{\lambda_0}{\rho^2} &=& u_1+\dot{x}^a-2u_0 \nonumber\\
\frac{\lambda_k}{\rho^2} &=& u_{k+1}+u_{k-1}-2u_k \quad\quad 1\leq k\leq n-2\nonumber\\
\frac{\lambda_{n-1}}{\rho^2} &=& \dot{x}^b+u_{n-2}-2u_{n-1}. 
\label{eq:optimality_cont}
\end{IEEEeqnarray}
Differentiating (\ref{eq:euler_cont}) once and comparing it to (\ref{eq:optimality_cont}) yields the following set of second order differential equations for the Lagrange multipliers: 
\begin{equation}
	\ddot{\lambda}_k-\frac{1}{\rho^2}\lambda_k = 0 \quad\quad 0\leq k\leq n-1.
\end{equation}
The solutions $\lambda_k$ are a linear combination of exponential modes $e^{\pm t/\rho}$ and can be written as,
\begin{equation}
 \lambda_k = c_{k}^1\,e^{t/\rho} + c_{k}^2\,e^{-t/\rho} 
	\label{eq:lagrange_cont}
\end{equation}
for some constants $c_{k}^1$ and $c_{k}^2$. This results in $2n$ unknowns in the form of these constants which can be solved for using the $n-1$ boundary conditions for Lagrange multipliers, the $n-1$ boundary conditions for functions $x_k$ and the two end point conditions $x_0(0) = x_a(\tau)$ and $x_{n-1}(\tau) = x_b(0)$. From the conditions $\lambda_k(0) = \lambda_{k-1}(\tau)$ for $k\in\{1,\cdots,n-1\}$, the following condition is obtained,
\[ c_k^1 + c_k^2 = c_{k-1}^1\,e^{\tau/\rho} + c_{k-1}^2\,e^{-\tau/\rho} \]
\begin{equation}
	\Rightarrow c_k^1 = c_{k-1}^1\,e^{\tau/\rho} + c_{k-1}^2\,e^{-\tau/\rho} - c_k^2.
	\label{eq:const1}
\end{equation}
Additionally, the conditions $x_k(0)=x_{k-1}(\tau)$ for $k\in\{1,\cdots,n-1\}$ on the shifted functions, translate to boundary conditions on $\dot{\lambda_k}$ in (\ref{eq:euler_cont}) i.e., $\dot{\lambda_k}(0) = \dot{\lambda_{k-1}}(\tau)$ for $k\in\{1,\cdots,n-1\}$ or that 
\[ \frac{1}{\rho}\left(c_k^1 - c_k^2\right) = \frac{1}{\rho}\left(c_{k-1}^1\,e^{\tau/\rho} - c_{k-1}^2\,e^{-\tau/\rho}\right). \]
Substituting (\ref{eq:const1}) in this yields,
\begin{equation}
	c_k^2 = c_{k-1}^2\,e^{-\tau/\rho}
	\label{eq:const2}
\end{equation}
and then (\ref{eq:const1}) becomes
\begin{equation}
	c_k^1 = c_{k-1}^1\,e^{\tau/\rho}.
	\label{eq:const1_final}
\end{equation}
Therefore, the expressions for all the Lagrange multipliers (\ref{eq:lagrange_cont}) can be written in terms of just two constants $c_0^1$ and $c_0^2$ as follows:
\begin{equation}
	\lambda_k = c_{0}^1\,e^{\frac{1}{\rho}(t+k\tau)} + c_{0}^2\,e^{-\frac{1}{\rho}(t+k\tau)}.
	\label{eq:lagrange2_cont}
\end{equation}
Substituting this in (\ref{eq:euler_cont}) gives us the set of equations to be solved for finding the Gluskabi raccordation. 
\end{proof}

\subsection{Raccordation for differing periods}
The problem of finding the connection between two periodic trajectories of different periods in general is more complicated. The complication arises from the fact that there is no simple operator such that the kernel of this operator is the behavior of all periodic trajectories of arbitrary period. The type of periodic trajectories with period belonging to a compact interval can be characterized by an operator involving the minimization operation but that does not lead to a direct easy solution. An alternate approach for connecting two periodic trajectories with rational periods is to choose the operator that characterizes the periodic type with period equal to the least common multiple of the periods of the two trajectories. In other words, given two trajectories $x^a$ and $x^b$ from the $\tau_1$- periodic type and $\tau_2$-periodic type the operator $\Op:=\mathbf I - \mathbf S_\tau$ can be used where $\tau = lcm(\tau_1,\tau_2)$, provided that both $\tau_1$ and $\tau_2$ are rationals. Notice that this $\tau$-periodic type includes both the $\tau_1$-periodic type and $\tau_2$-periodic type. Once the operator has been found any of the previous results can be applied to obtain the relevant Gluskabi raccordation when the two periods are different but rational. \\

\section{EXAMPLES}
\label{sec:ex}
In this section, we illustrate the results presented in the previous section with the help of some examples. The first example is the problem of finding raccordations between $\cos{2\pi t}$ and the triangle wave with period one. Notice that both these trajectories are from the periodic type with period one and this type can be characterized by the operator $(\mathbf I-\mathbf S_\tau)$ where $\tau=1$. The raccordation is sought over the interval $[0,2.5]$ and so the length of the interval is not a multiple of the period. The result from Theorem \ref{thm:general} is applied and the resultant raccordation is shown in Fig. \ref{fig:nonmultiple}. Notice the discontinuities at multiples of the period as well as at distance $2.5\mod{1}=\frac{1}{2}$ or at the midpoint within each period. The discontinuities at the multiples of the period is due to the discrepancy in the values of the functions at the end points of a single period i.e. the cosine function at $0$ is one but the triangle wave being considered is zero at $0$. The discontinuity at the middle point in every period can be attributed to the discrepancy in the values of the two functions at that particular point in every period. Therefore considering one aligned period, if the two function are equal at the end points and at a distance equal to the raccordation interval length modulo the period then the raccordation will be continuous. 

\begin{figure}[h]
\centering
\includegraphics[scale=0.4]{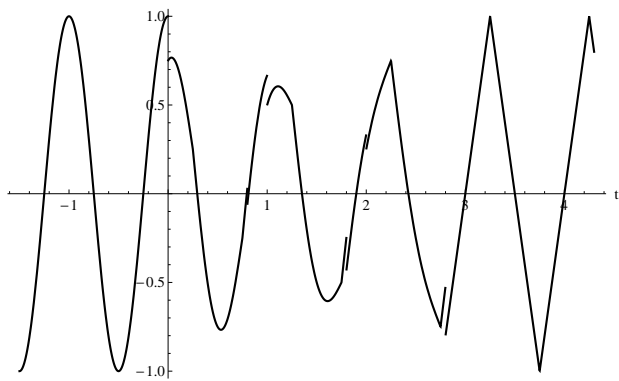}
\caption{The function is $\cos{2\pi t}$ for $t\leq 0$ and a triangle wave for $t\geq 2.5$. Raccordation is over the interval $[0,2.5]$.}
\label{fig:nonmultiple}
\end{figure}

The second example considers the same problem of finding raccordation between $\cos{2\pi t}$ and the triangle wave but now a continuous raccordation is sought over the interval $[0,4]$. The result from Theorem \ref{thm:continuous} will be used in this case and the resultant raccordation is depicted in Fig. \ref{fig:cont}, where the regularization factor $\rho = 1$. This regularization factor means that both the discrepancy in the trajectory and the discrepancy in the derivative are equally weighted. Increasing the value of $\rho$ would weigh the derivative more and so smoothen the raccordation. Notice the decrease in the magnitude of the raccordation followed by an increase to the right magnitude. This pinching effect can be attributed to the difference in phase of the two trajectories being connected. It was also observed previously in the problem of connecting harmonics in \cite{deryck2011thesis}. The effect is most pronounced when the two trajectories are $180$ degrees out of phase. Equivalently, there is no pinching at all when the two trajectories are phase aligned. This is illustrated in Fig. \ref{fig:cont2} and Fig. \ref{fig:cont3}.\\

\begin{figure}[h]
\centering
\includegraphics[scale=0.4]{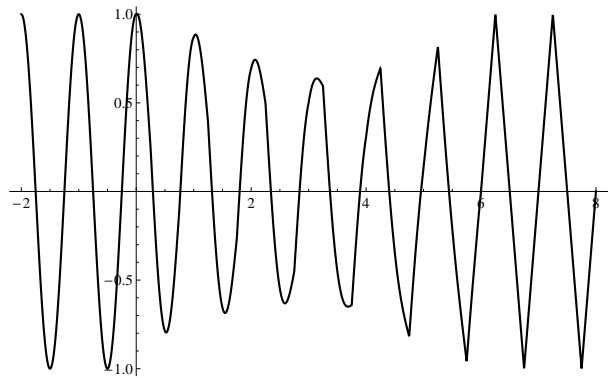}
\caption{The function is $\cos{2\pi t}$ for $t\leq 0$ and the triangle wave for $t\geq 4$. Raccordation is over the interval $[0,4]$.}
\label{fig:cont}
\end{figure}

\begin{figure}[h]
\centering
\includegraphics[scale=0.4]{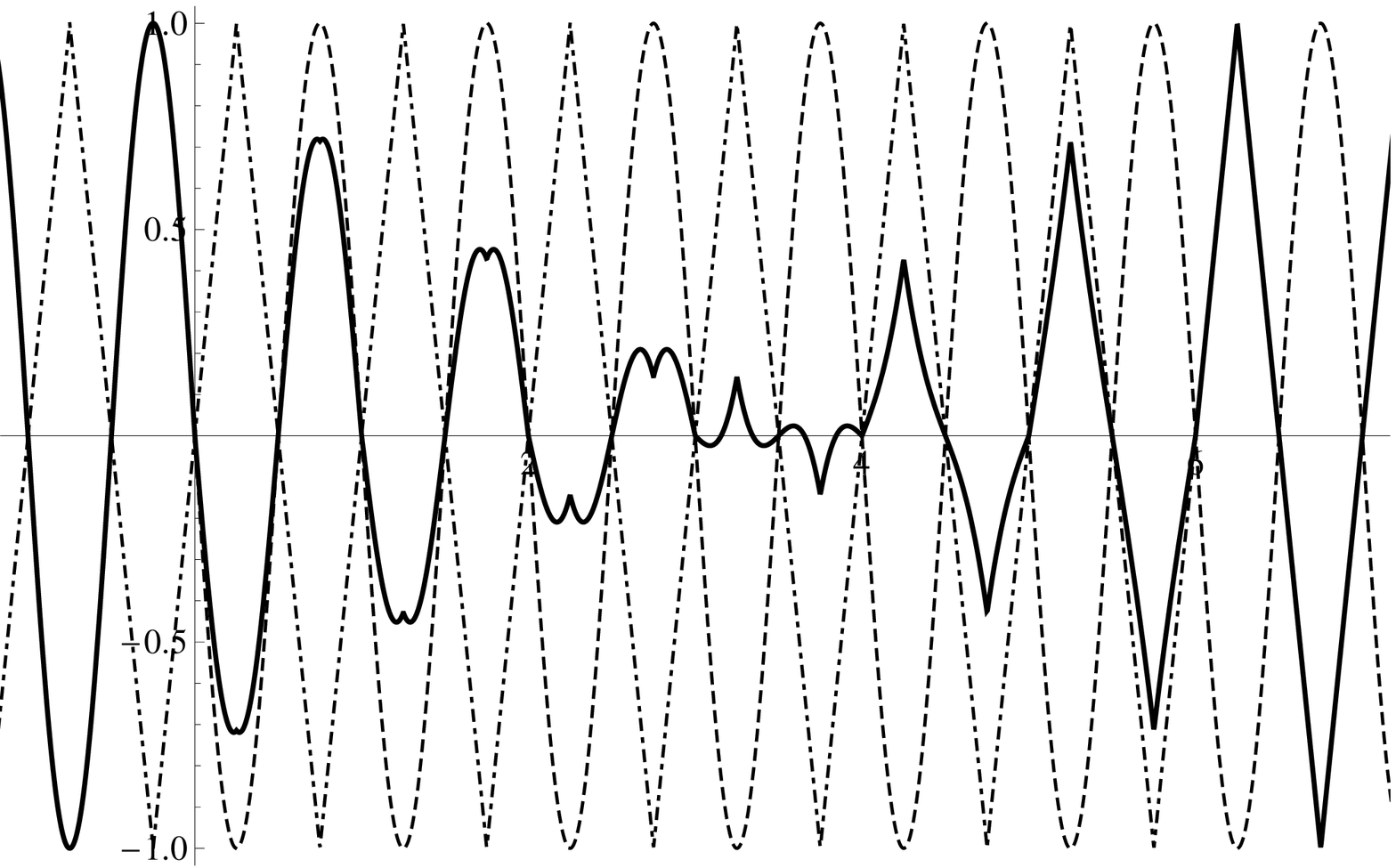}
\caption{The function is $\cos{\left(2\pi t+\frac{\pi}{2}\right)}$ for $t\leq 0$ and the triangle wave for $t\geq 6$. The two functions are $180^\circ$ out of phase. Dashed lines show the functions being connected.}
\label{fig:cont2}
\end{figure}

\begin{figure}[h]
\centering
\includegraphics[scale=0.4]{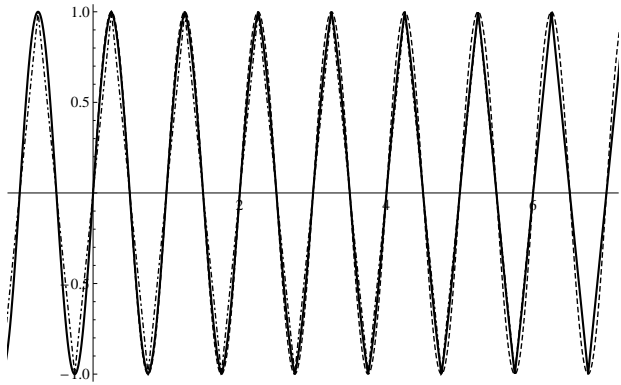}
\caption{The function is $\cos{\left(2\pi t-\frac{\pi}{2}\right)}$ for $t\leq 0$ and the triangle wave for $t\geq 6$. The two functions are phase aligned. Dashed lines show the functions being connected.}
\label{fig:cont3}
\end{figure}




%

%


\bibliographystyle{IEEEtran}
\bibliography{gluskabi}

\end{document}